%------------------------------------------------------------------------- 
%------------------------------------------------------------------------- 
\documentclass[a4paper]{article}

\usepackage{amsmath,amsfonts,amssymb,amsthm}
\usepackage{fullpage}
\usepackage[all]{xy}
\SelectTips{cm}{}

\setlength{\parindent}{0pt}
\setlength{\parskip}{5pt}

% for large screen
%\setlength{\textheight}{20cm}
% for small screen
%\setlength{\textheight}{13cm}

% theorems etc 
\newtheorem{theorem}{Theorem}[section]
\newtheorem{proposition}[theorem]{Proposition}

\theoremstyle{definition}
\newtheorem{definition}[theorem]{Definition}
\newtheorem{remark}[theorem]{Remark}
\newtheorem{example}[theorem]{Example}
\newtheorem{assumption}[theorem]{Assumption}

% arrows
\newcommand{\rto}{\xymatrix@C=1pc{ \ar[r] & } } % morphism
\newcommand{\lto}{\xymatrix@C=1pc{ & \ar[l] } } % morphism, left
\newcommand{\redto}{\rightsquigarrow} % reduction 
 % reduction iterated
\newcommand{\plto}{\xymatrix@C=1pc{\ar[r] & \ar@{-->}@<1ex>[l] }} % pleomorphism
\newcommand{\pllto}{\xymatrix@C=1pc{\ar@{-->}@<-1ex>[r] & \ar[l] }} % pleomorphism, left
\newcommand{\pleq}{\xymatrix@C=1pc{\ar@{-}[r] & \ar@{--}@<1ex>[l] }} % pleoequivalence

% miscellaneous
\newcommand{\id}{\mathit{id}}
\newcommand{\fr}[2]{{#2}\backslash{#1}} % #1 is the numerator
\newcommand{\logL}{\mathcal{L}} 
\newcommand{\catC}{\mathbf{C}}
\newcommand{\catS}{\mathbf{S}} 
\newcommand{\catT}{\mathbf{T}} 
\newcommand{\Ss}{\Sigma} 
 
\renewcommand{\ss}{\sigma} 
 
\newcommand{\eqn}{\mathrm{eq}} 
\newcommand{\nat}{\mathrm{nat}}

% indices for POs
\newcommand{\potop}{\mathrm{top}} % top
\newcommand{\pobot}{\mathrm{bottom}} % bottom
\newcommand{\poback}{\mathrm{back}} % back right 
\newcommand{\pofront}{\mathrm{front}} % front left
\newcommand{\podiag}{\mathrm{diag}} % diagonal
\newcommand{\postep}{\mathrm{step1}} % from deduction step 
\newcommand{\podred}{\mathrm{step2}} % deduction as reduction  

%------------------------------------------------------------------------- 
%------------------------------------------------------------------------- 
\title{Deduction as Reduction} 
\author{Dominique Duval \\ 
LJK, University of Grenoble \\ 
\texttt{Dominique.Duval@imag.fr}}
\date{October 30., 2010}

%------------------------------------------------------------------------- 
%------------------------------------------------------------------------- 
\begin{document}
\maketitle

\begin{itemize}
\item[] \textbf{Abstract.}
Deduction systems and graph rewriting systems
are compared within a common categorical framework. 
This leads to an improved deduction method in diagrammatic logics. 
\end{itemize}

%------------------------------------------------------------------------- 
%------------------------------------------------------------------------- 
\section{Introduction}

Deduction systems and graph rewriting systems
can be seen as different kinds of reduction systems. 
In this article, they are compared within a common categorical framework. 
This leads to an improved deduction method in diagrammatic logics. 

On the one hand, 
category theory may be used for describing graph rewriting systems:
this is called the algebraic approach to graph rewriting \cite{dpo, spo}.
On the other hand, 
proofs may be modelled by morphisms in a category \cite{lam68, law69}.
However, these are two disjoint research topics. 
In this paper, we present a common categorical framework 
for dealing with graph rewriting and with logical deduction. 

Our deduction systems are defined using \emph{diagrammatic logics} \cite{Du03, DD10},
where a logic is a functor $\logL:\catS\to\catT$ satisfying several properties. 
In this framework, a major role is played by \emph{pleomorphisms}
(called \emph{entailments} in previous papers):
for a given logic $\logL:\catS\to\catT$,  
a pleomorphism is ``half-way'' between a general morphism 
and an isomorphism; it is a (generally non-reversible) morphism in the category
$\catS$ which is mapped by the functor $\logL$ 
to a reversible morphism in the category $\catT$.
In biology, a pleomorphism is the occurrence of several  
structural forms during the life cycle of a plant;
in diagrammatic logic, a pleomorphism refers to the occurrence of several  
presentations of a given logical theory during a proof:
various lemmas are progressively added to the given axioms 
until the required theorem is obtained. 
In this paper, the analogy with graph rewriting systems  
extends this approach in such a way that it becomes possible 
to drop intermediate lemmas. 

Section~\ref{sec:red} is devoted to graph rewriting systems
and section~\ref{sec:ded} to deduction systems,
then the comparison is done in section~\ref{sec:dred}.
Some familiary is assumed from the reader with the notions of 
categories, functors and pushouts.
It is recalled that a span (resp. a cospan) in a category is simply 
a pair of morphisms with the same domain (resp. codomain).

%------------------------------------------------------------------------- 
%------------------------------------------------------------------------- 
\section{Reduction: graph rewriting}
\label{sec:red}

Well-known examples of \emph{reduction systems} (or \emph{rewriting systems}) 
are \emph{string} rewriting systems, 
\emph{term} rewriting systems and 
\emph{graph} rewriting systems. 
Let us focus on the last ones. 
A graph rewriting system consists of a binary relation $\redto$ on 
graphs, i.e., a set of \emph{rewrite rules} of the form $L\redto R$. 
Given a rewrite rule $L\redto R$ and  
an occurrence (called a \emph{match}) of $L$ in a graph $G$, 
the \emph{rewrite step} consists of ``replacing'' the occurrence of $L$ in $G$ 
by an instance of $R$, which gives rise to a new graph $H$.
This can be applied to various families of graphs, 
and the notion of ``replacement'' may take various meanings.

In the algebraic approach,
graph rewriting systems are described in a categorical framework.
Such systems include the 
\emph{double pushout} (DPO), 
\emph{simple pushout} (SPO), 
\emph{sesqui-pushout} (SqPO) and
\emph{heterogeneous pushout} (HPO) approaches   
\cite{spo, dpo, sqpo, hpo}. 

In this paper we focus on the double pushout and 
the sesqui-pushout graph rewriting systems.
Given a category $\catC$ of graphs, 
in both approaches a \emph{match} is a morphism $m_L:L\rto G$ in $\catC$
and a \emph{rewrite rule} $L\redto R$ is a span $(l,r)$ between $L$ and $R$
in $\catC$
  $$ \xymatrix@R=1pc{
  L \ar[d]_{m_L} \\
  G \\
  } \qquad\qquad
  \xymatrix@R=0pc@C=4pc{
  & K \ar[ld]_{l} \ar[rd]^{r} & \\
  L & & R \\
  } $$
Let us call \emph{generalized pushout under a span $(l,r)$} 
a diagram of the following form, 
with a commutative square on the left and a pushout square on the right
  $$ \xymatrix@R=0pc@C=4pc{
  & K \ar[ld]_{l} \ar[rd]^{r} \ar[dd]^{m_K} & \\
  L \ar[dd]_{m_L} & & R \ar[dd]^{m_R} \\
  & D \ar[ld]_{l_1} \ar[rd]^{r_1} \ar@{}[ul]|{(=)} \ar@{}[ur]|{(PO)} & \\
  G & & H \\
  } $$
In both graph rewriting systems the \emph{rewrite step} 
builds a generalized pushout.
First $m_K$ and $l_1$ are built from $m_L$ and $l$ 
so as to get a commutative square,
then $m_R$ and $r_1$ are built from $m_K$ and $r$ 
so as to get a pushout. 
In both approaches there are restrictions on the form 
of the matches and rules. 
\begin{itemize}
\item In the double pushout approach, the left square 
is a pushout, which means that the construction of 
$m_K$ and $l_1$ from $m_L$ and $l$ is a 
\emph{pushout complement}.
\item In the sesqui-pushout approach, the left square 
is a pullback, and more precisely the construction of 
$m_K$ and $l_1$ from $m_L$ and $l$ is a 
\emph{final pullback complement}.
\end{itemize}

%------------------------------------------------------------------------- 
%------------------------------------------------------------------------- 
\section{Deduction: diagrammatic logic}
\label{sec:ded}

Deduction systems, in  this paper, 
are defined in the framework of \emph{diagrammatic logic}
\cite{Du03, DD10}. 
We only need to know that a diagrammatic logic
is a pushout-preserving functor $\logL:\catS\to\catT$ 
between two categories with pushouts.
Then $\catT$ is called the category of \emph{theories} of the logic $\logL$
and $\catS$ its category of \emph{specifications}.
Each specification $\Ss$ \emph{presents}, or \emph{generates}, the theory~$\logL\Ss$. 

In order to get a full definition 
one must add that $\logL$ is the left adjoint 
in an adjunction induced by a morphism of limit sketches \cite{Eh68}
and that it makes $\catT$ a category of fractions over $\catS$ \cite{GZ67}. 
This full definition, which will not be used in this paper,
enlightens the importance of pleomorphisms (definition~\ref{defi:ded-pleo})  
in diagrammatic logic: in fact, 
in this situation the pleomorphisms determine the functor $\logL$ \cite{GZ67}. 

\begin{definition}
\label{defi:ded-pleo}
With respect to some given diagrammatic logic $\logL:\catS\to\catT$:
\begin{itemize}
\item Two specifications $\Ss$, $\Ss'$ in $\catS$ are \emph{pleoequivalent} 
if there is an isomorphism of theories $\logL\Ss \cong \logL\Ss'$; 
this is denoted $\Ss\pleq\Ss'$. 
\item An \emph{instance} of $\Ss_1$ in $\Ss_2$, 
where $\Ss_1$ and $\Ss_2$ are specifications, 
is a morphism $\ss':\Ss_1\rto\Ss'_2$ in $\catS$
where $\Ss'_2$ is pleoequivalent to $\Ss_2$;
this is denoted $\Ss_1\rto\Ss'_2\pleq\Ss_2$.
\item A \emph{pleomorphism} 
is a morphism of specifications $\tau:\Ss\rto\Ss'$ 
such that $\logL\tau$ is an isomorphism of theories;
this is denoted $\tau:\Ss\plto\Ss'$.
\item A \emph{fraction} from $\Ss_1$ to $\Ss_2$ 
is a cospan $(\ss:\Ss_1\rto\Ss'_2,\tau:\Ss_2\plto\Ss'_2)$ in $\catS$ 
where $\tau$ is a pleomorphism;
this is denoted $\fr{\ss}{\tau}:\Ss_1 \rto \Ss'_2 \pllto \Ss_2$.
The \emph{numerator} of $\fr{\ss}{\tau}$ is~$\ss$, its \emph{denominator} is $\tau$
and its \emph{vertex} is $\Ss'_2$.
\end{itemize}
\end{definition}

\begin{remark}
Clearly, when two specifications are related by a zig-zag of 
pleomorphisms, they are pleoequivalent:
the equivalence relation generated by the pleomorphisms
is included in the pleoequivalence relation.
\end{remark}

The next result states some straightforward properties of pleomorphisms.

\begin{proposition}
\label{prop:ded-pleo}
Pleomorphisms satisfy the following properties: 
\begin{itemize} 
\item every isomorphism in $\catS$ is a pleomorphism,
\item if $h=g\circ f$ in $\catS$ and if two among $f,g,h$ are 
pleomorphisms, then so is the third,
\item pleomorphisms are stable under pushouts. 
\end{itemize}
\end{proposition}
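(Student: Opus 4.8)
The plan is to transport each of the three claims across the functor $\logL$ and reduce it to a corresponding, standard fact about isomorphisms in the target category $\catT$. This is the natural strategy because a pleomorphism is, by definition, precisely a morphism of $\catS$ that $\logL$ sends to an isomorphism of $\catT$; so a statement about pleomorphisms in $\catS$ will follow once the image statement about isomorphisms in $\catT$ is in hand, the two being linked by functoriality and, for the last item, by the hypothesis that $\logL$ preserves pushouts.

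For the first item I would simply invoke the elementary fact that every functor preserves isomorphisms: if $f$ is an isomorphism in $\catS$ with inverse $f^{-1}$, then $\logL f$ and $\logL(f^{-1})$ are mutually inverse in $\catT$ by functoriality, so $\logL f$ is an isomorphism and hence $f$ is a pleomorphism. For the two-out-of-three item I would apply $\logL$ to the relation $h = g\circ f$ to obtain $\logL h = \logL g \circ \logL f$ in $\catT$. The hypothesis that two of $f,g,h$ are pleomorphisms says exactly that two of $\logL f,\logL g,\logL h$ are isomorphisms, and the conclusion then reduces to the two-out-of-three property for isomorphisms in an arbitrary category. That property is immediate: a composite of two isomorphisms is an isomorphism, and given any two of the three maps as isomorphisms one recovers the remaining one as a composite involving an inverse (for instance $\logL g = \logL h \circ (\logL f)^{-1}$ when $\logL f$ and $\logL h$ are invertible). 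Hence the third of $f,g,h$ is again a pleomorphism.

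For the third item I would start from a pushout square in $\catS$ in which one of the two parallel sides is a pleomorphism $\tau$, and let $\tau'$ denote its pushout along the remaining morphism. Applying $\logL$, which preserves pushouts, produces a pushout square in $\catT$ whose side corresponding to $\tau$ is $\logL\tau$, an isomorphism. It then suffices to appeal to the classical fact that isomorphisms are stable under pushouts in any category: the pushout of an isomorphism along an arbitrary morphism is again an isomorphism. Consequently $\logL\tau'$ is an isomorphism and $\tau'$ is a pleomorphism.

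I expect the first two items to be routine, the only genuine categorical content lying in the last step, namely the stability of isomorphisms under pushouts in $\catT$. This is the point I would justify with care, although it is standard: one checks directly from the universal property that the comparison cocone exhibiting the pushout of an isomorphism is itself invertible. Once that lemma is granted, pushout-preservation of $\logL$ does all the remaining work, and the main obstacle is therefore conceptual bookkeeping rather than any deep argument.
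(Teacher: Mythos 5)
Your proof is correct and is exactly the argument the paper intends: the paper states these as ``straightforward properties'' and omits the proof entirely, and your reduction via $\logL$ to the corresponding facts about isomorphisms in $\catT$ (functors preserve isomorphisms, two-out-of-three for isomorphisms, stability of isomorphisms under pushouts combined with the pushout-preservation of $\logL$) is the standard and evidently intended route.
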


\begin{definition}
\label{defi:ded-step}
Given a diagrammatic logic $\logL:\catS\to\catT$
\begin{itemize}
\item A \emph{deduction rule} (or \emph{inference rule})
is a fraction $\fr{c}{h}:C \rto P \pllto H$ from $C$ to $H$. 
The \emph{hypothesis} of $\fr{c}{h}$ is $H$, its \emph{conclusion} is $C$.
\item The \emph{deduction step} 
with respect to a deduction rule $\fr{c}{h}:C \rto P \pllto H$ 
maps each instance $\ss_H:H\rto\Ss_H\pleq\Ss$ of $H$ in some specification $\Ss$
to the instance $\ss_C:C\rto\Ss_C\pleq\Ss$ of $C$ in the same $\Ss$
defined as follows (where $h_1$ is a pleomorphism 
because so is $h$ and pleomorphisms are stable under pushouts) 
  $$ \xymatrix@R=0pc@C=4pc{
  H \ar[rd]^{h} \ar[dd]_{\ss_H} && C \ar@/^3ex/[dddl]^{\ss_C} \ar[ld]_{c} \\
  & P \ar[dd]^{\ss_P} \ar@<1ex>@{-->}[lu] & \\
  \Ss_H \ar[rd]^{h_1} &&  \\
  & \Ss_P \ar@<1ex>@{-->}[lu] \ar@{}[uuul]|{(PO)} \ar@{}[uuur]|{(=)} \ar@{}[l]|(.7){(=)} &  \\
  \Ss \ar@{-}[uu] \ar@<1ex>@{--}[uu] \ar@{-}[ru] \ar@<-1ex>@{--}[ru] && \\
  } $$
\end{itemize}
\end{definition}

\begin{remark}
A deduction rule $\fr{c}{h}$ has numerator $c$ and denominator $h$,
in contrast with the usual notation $\frac{H}{C}$.
Indeed, it is the morphism $h$, and not $c$, which becomes 
an isomorphism of theories. 
\end{remark}

\begin{remark}
This construction is essentially 
the composition of fractions in their bicategory.
\end{remark}

\begin{example}
\label{exam:ded}
Let $\logL_{\eqn}$ be the equational logic.
One of its rule is the \emph{transitivity} rule 
  $$ \frac{x\equiv  y \quad y\equiv z }{x\equiv z}$$
which corresponds to the fraction
  $$ \begin{array}{ccccc}     
  \begin{array}{|c|} \hline H \\ \hline 
    x\equiv y \\ y\equiv z \\ \hline \multicolumn{1}{c}{\rule{0pt}{20pt}} \\ \end{array} & 
  \begin{array}{c} 
    \xymatrix@R=0pc{ \ar[rd] & \\ & \ar@<1ex>@{-->}[lu] \\ } \end{array} & 
  \begin{array}{|c|} \multicolumn{1}{c}{\rule{0pt}{20pt}} \\ \hline P \\ \hline 
    x\equiv y \\ y\equiv z \\ x\equiv z \\ \hline \end{array} & 
  \begin{array}{c} 
    \xymatrix@R=0pc{ & \ar[ld] \\ & \\} \end{array} & 
  \begin{array}{|c|} \hline C \\ \hline 
    x\equiv z \\ \hline \multicolumn{1}{c}{\rule{0pt}{20pt}}  \\ \end{array} \\ 
  \end{array} $$
Let $\Ss_{\nat}$ be the equational specification ``of naturals'' 
made of a sort $N$, a constant $0:N$, two operations $s:N\to N$, $+:N^2\to N$
and two equations $0+y\equiv y$, $s(x)+y\equiv s(x+y)$.
Let us analyze the last step in the proof of $1+1\equiv s(1)$
(where $1$ stands for $s(0)$), 
once it has been proved that 
$1+1\equiv s(0+1)$ and that $s(0+1)\equiv s(1)$,
so that it remains to use the transitivity rule in order to conclude.
Then $\Ss_{\nat,H}$ is $\Ss_{\nat}$ together with the terms 
$1+1$, $s(0+1)$, $s(1)$ and the equations 
$1+1\equiv s(0+1)$, $s(0+1)\equiv s(1)$.
The deduction step yields $\Ss_{\nat,P}$, made of 
$\Ss_{\nat,H}$ together with the equation $1+1\equiv s(1)$,
and $\ss_C$ maps $x\equiv z$ to $1+1\equiv s(1)$, as required.
  $$ \begin{array}{ccc}     
  \begin{array}{|l|} \hline 
    \multicolumn{1}{|c|}{\Ss_{\nat,H}} \\ \hline 
    \Ss_{\nat} \mbox{ with } \\
    1+1\equiv s(0+1) \\ 
    s(0+1)\equiv s(1) \\ 
    \hline \multicolumn{1}{c}{\rule{0pt}{20pt}} \\ \end{array} & 
  \begin{array}{c} 
    \xymatrix@R=0pc{ \ar[rd] & \\ & \ar@<1ex>@{-->}[lu] \\ } \end{array} & 
  \begin{array}{|l|} \multicolumn{1}{c}{\rule{0pt}{20pt}} \\ \hline 
    \multicolumn{1}{|c|}{\Ss_{\nat,P}} \\ \hline 
    \Ss_{\nat} \mbox{ with } \\
    1+1\equiv s(0+1) \\ 
    s(0+1)\equiv s(1) \\ 
    1+1\equiv s(1) \\ \hline \end{array} \\ 
  \end{array} \qquad \qquad $$
\end{example}

%------------------------------------------------------------------------- 
%------------------------------------------------------------------------- 
\section{Deduction as Reduction}
\label{sec:dred}

It is clear from the previous sections that deduction
is a form of reduction, as well as graph rewriting.
\begin{itemize}
\item In a graph rewriting system (section~\ref{sec:red}),
given a rewrite rule $L\redto R$ and a match of $L$ in $G$, 
the rewrite step consists of ``replacing'' the occurrence of $L$ in a graph $G$ 
by an instance of $R$, which gives rise to a new graph $H$.
\item In a deduction system (section~\ref{sec:ded}),
given a deduction rule $\frac{H}{C}$ and an instance of $H$ in $\Ss$
with vertex~$\Ss_H$, 
the deduction step consists of ``replacing'' the occurrence of $H$ in $\Ss$ 
by an instance of $C$, which gives rise to an instance of $C$ in $\Ss$
with a new vertex $\Ss_C$.
\end{itemize}

But a graph rewrite rule is a span while a deduction rule is a cospan, 
so that the descriptions of the rewrite steps are quite different. 
However, in this section, under the assumption that a deduction rule 
can also be defined from a span, 
we exhibit similarities between both reduction systems
and we propose improvements in the construction of the deduction steps.

\begin{assumption}
\label{ass:dred-po}
It is now assumed 
that each deduction rule $\fr{c}{h}:C \rto P \pllto H$ 
is obtained from a pushout
  $$\xymatrix@R=0pc@C=3pc{
  & K \ar[dl]_{l} \ar[drr]^{r} && \\
  H \ar@<-1ex>[drr]_{h} &&& C \ar[dl]^{c} \\
  && P \ar@{-->}[llu] \ar@{}[uul]|{(PO)} & \\
  }$$ 
\end{assumption}

\begin{remark}
Usually a deduction rule is given as $\frac{H}{C}$, 
neither $P$ nor $K$ are mentioned. 
Then $K$ can be defined as the family of features which have 
the same name in $H$ and in $C$,
and $P$ can be obtained from a pushout as in assumption~\ref{ass:dred-po}.
\end{remark}

In section~\ref{sec:red} we have seen 
double pushouts and sesqui-pushouts as instances of generalized pushouts. 
Now we define a third family of generalized pushouts. 

\begin{definition}
\label{defi:dred-plpo}
Given a span $(l,r)$, a \emph{pleopushout under $(l,r)$} is a diagram of the following form, 
with a commutative square on the left, a pushout square on the right,
where $l_1$ is a pleomorphism
  $$ \xymatrix@R=0pc@C=3pc{
  & K \ar[ld]_{l} \ar[rrd]^{r} \ar[dd]^{\ss_K} && \\
  H \ar[dd]_{\ss_H} & && C \ar[dd]^{\ss_C} \\
  & \Ss_K \ar@<1ex>[ld]^{l_1} \ar[rrd]_{r_1} \ar@{}[ul]|{(=)} \ar@{}[urr]|{(PO)} && \\
  \Ss_H \ar@{-->}[ru] & && \Ss_C \\
  } $$
\end{definition}

Theorem~\ref{theo:dred-plpo} below builds a deduction step from a 
pleopushout. In its proof we use the properties of pleomorphisms 
stated in proposition~\ref{prop:ded-pleo}
and well-known properties of pushouts 
stated now in proposition~\ref{prop:dred-po}.

\begin{proposition}
\label{prop:dred-po}
Given two consecutive commutative squares $(1)$, $(2)$ 
and the composed commutative squares $(3)$,
if $(1)$ is a pushout then 
$(2)$ is a pushout if and only if $(3)$ is a pushout.
  $$ \xymatrix{
  \bullet \ar[r] \ar[d] & \bullet \ar[r] \ar[d] & \bullet \ar[d] \ar@{}[dr]|{=} &
  \bullet \ar[r] \ar[d] & \bullet \ar[r] & \bullet \ar[d] \\
  \bullet \ar[r] & \bullet \ar[r] \ar@{}[ul]|{(1)} \ar@{}[ur]|{(2)} & \bullet &
  \bullet \ar[r] & \bullet \ar[r] \ar@{}[u]|{(3)} & \bullet \\
  }$$
\end{proposition}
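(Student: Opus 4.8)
The plan is to argue directly from the universal property of pushouts, keeping square $(1)$ fixed as a pushout throughout. First I would fix notation: write the top row as $A \xrightarrow{f} B \xrightarrow{g} C$, the bottom row as $A' \xrightarrow{f'} B' \xrightarrow{g'} C'$, and the vertical maps as $a : A \to A'$, $b : B \to B'$, $c : C \to C'$, so that $(1)$ is the square on $A,B,A',B'$, $(2)$ the square on $B,C,B',C'$, and $(3)$ their horizontal paste. Commutativity gives $b \circ f = f' \circ a$, $c \circ g = g' \circ b$, and hence $c \circ (g \circ f) = (g' \circ f') \circ a$. The one structural fact I would invoke repeatedly is that, since $(1)$ is a pushout, the pair $(b,f')$ is jointly epic: a morphism out of $B'$ is determined by its composites with $b$ and $f'$.

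For the implication that $(2)$ a pushout forces $(3)$ to be one, I would take an arbitrary cocone on the span $C \xleftarrow{g \circ f} A \xrightarrow{a} A'$, i.e.\ maps $u : C \to T$ and $v : A' \to T$ with $u \circ (g \circ f) = v \circ a$. Reading this as $(u \circ g) \circ f = v \circ a$, the universal property of $(1)$ yields a unique $t : B' \to T$ with $t \circ b = u \circ g$ and $t \circ f' = v$; then $(u,t)$ is a cocone on the span defining $(2)$ since $t \circ b = u \circ g$, so the universal property of $(2)$ produces a unique $w : C' \to T$ with $w \circ c = u$ and $w \circ g' = t$. A short computation gives $w \circ (g' \circ f') = t \circ f' = v$, so $w$ mediates the cocone $(u,v)$ through $(3)$, and uniqueness of $w$ follows by reversing the construction. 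This is the routine direction.

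The converse is where the real work lies, and where I expect the only genuine obstacle. Assuming $(3)$ is a pushout, I would take a cocone $(u : C \to T,\ s : B' \to T)$ on the span $C \xleftarrow{g} B \xrightarrow{b} B'$, so $u \circ g = s \circ b$. The candidate cocone for $(3)$ is $(u,\ s \circ f')$, and it is compatible: using commutativity of $(1)$ and the hypothesis, $(s \circ f') \circ a = s \circ b \circ f = u \circ g \circ f = u \circ (g \circ f)$. The universal property of $(3)$ then delivers a unique $w : C' \to T$ with $w \circ c = u$ and $w \circ (g' \circ f') = s \circ f'$. It remains to verify $w \circ g' = s$, and this is exactly the step that cannot be read off from $(3)$ alone: I would check that $w \circ g'$ and $s$ have the same composites with the injections $b$ and $f'$ of square $(1)$ — namely $w \circ g' \circ b = w \circ c \circ g = u \circ g = s \circ b$ and $w \circ g' \circ f' = s \circ f'$ — and then invoke the joint epimorphism property of $(1)$ to conclude $w \circ g' = s$. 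Uniqueness of $w$ comes again from the universal property of $(3)$. The hinge of the whole argument is thus the hypothesis that $(1)$ is a pushout, used precisely in this final cancellation; without it the backward implication fails.
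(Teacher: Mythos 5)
Your argument is correct: both directions are handled by the standard universal-property chase, and you correctly isolate the one non-routine point, namely that the backward implication needs the joint epicness of the pushout injections of $(1)$ to upgrade $w \circ g' \circ f' = s \circ f'$ to $w \circ g' = s$. The paper states this proposition without proof as a well-known fact about pushouts, and your proof is precisely the standard argument it implicitly relies on.
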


\begin{theorem}
\label{theo:dred-plpo}
Let $\fr{c}{h}$ be a deduction rule satisfying assumption~\ref{ass:dred-po}
  $$\xymatrix@R=0pc@C=3pc{
  & K \ar[dl]_{l} \ar[drr]^{r} && \\
  H \ar@<-1ex>[drr]_{h} &&& C \ar[dl]^{c} \\
  && P \ar@{-->}[llu] \ar@{}[uul]|{(PO)_\potop} & \\
  }$$ 
 Let $\ss_H$ be an instance of $H$ in $\Ss$,
 and let us assume that there is a pleopushout under $(l,r)$ 
  $$ \xymatrix@R=0pc@C=3pc{
  & K \ar[ld]_{l} \ar[rrd]^{r} \ar[dd]^{\ss_K} && \\
  H \ar[dd]_{\ss_H} & && C \ar[dd]^{\ss_C} \\
  & \Ss_K \ar@<1ex>[ld]^{l_1} \ar[rrd]_{r_1} \ar@{}[ul]|{(=)} \ar@{}[urr]|{(PO)_\poback} && \\
  \Ss_H \ar@{-->}[ru] & && \Ss_C \\
  } $$
Let us consider the pushout
  $$\xymatrix@R=0pc@C=3pc{
  & \Ss_K \ar@<-1ex>[dl]_{l_1} \ar[drr]^{r_1} && \\
  \Ss_H \ar@{-->}[ru] \ar[drr]_{h_1} &&& \Ss_C \ar[dl]^{c_1} \\
  && \Ss_P \ar@{}[uul]|{(PO)_\pobot} & \\
  }$$ 
Then there is a unique morphism $\ss_P:P\rto\Ss_P$ such that we get a commutative cube
  $$ \xymatrix@R=0pc@C=3pc{
  & K \ar[ld]_{l} \ar[rrd]^(.3){r} \ar@{..>}[dddd]^(.2){\ss_K} && \\
  H \ar[dddd]_(.2){\ss_H} \ar[drr]_(.3){h} &&& 
    C \ar[dddd]^(.2){\ss_C} \ar[dl]^{c} \\
  && P \ar[dddd]_(.2){\ss_P} & \\
  &&& \\ 
  & \Ss_K \ar@{..>}[ld]_{l_1} \ar@{..>}[rrd]^(.3){r_1} && \\
  \Ss_H \ar[drr]_(.3){h_1} &&& 
    \Ss_C \ar[dl]^{c_1} \\
  && \Ss_P & \\
  } $$
Then in this cube:
\begin{itemize}
\item the top, bottom, back right and front left faces are pushouts, 
\item the morphism $h$ and the four morphisms of the bottom face are pleomorphisms.
\end{itemize}
If $\ss_H$ is an instance of $H$ in some $\Ss$ 
then $\ss_C$ is an instance of $C$ in $\Ss$.
\end{theorem}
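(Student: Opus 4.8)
The plan is to read every assertion off the cube from its three given pushout faces. First I would build $\ss_P$ by the universal property of the top pushout $(PO)_\potop$. The morphisms $h_1\circ\ss_H:H\rto\Ss_P$ and $c_1\circ\ss_C:C\rto\Ss_P$ form a cocone on the span $(l,r)$: using that the back left face commutes, so $\ss_H\circ l=l_1\circ\ss_K$, that the back right pushout $(PO)_\poback$ commutes, so $\ss_C\circ r=r_1\circ\ss_K$, and that the bottom pushout $(PO)_\pobot$ commutes, so $h_1\circ l_1=c_1\circ r_1$, one computes $h_1\circ\ss_H\circ l=h_1\circ l_1\circ\ss_K=c_1\circ r_1\circ\ss_K=c_1\circ\ss_C\circ r$. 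Hence there is a unique $\ss_P:P\rto\Ss_P$ with $\ss_P\circ h=h_1\circ\ss_H$ and $\ss_P\circ c=c_1\circ\ss_C$; these two equations say precisely that the front left and front right faces commute, so the whole cube commutes and $\ss_P$ is the unique such morphism.

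Next I would show that the front left face is a pushout; the top, bottom and back right faces are pushouts by hypothesis. The tool is the pasting property of proposition~\ref{prop:dred-po}. Pasting the back right pushout $(PO)_\poback$ above the bottom pushout $(PO)_\pobot$ along the shared edge $r_1:\Ss_K\rto\Ss_C$ produces a composite square with top edge $r:K\rto C$, sides $\ss_H\circ l$ and $\ss_P\circ c$, and bottom edge $h_1:\Ss_H\rto\Ss_P$; being the paste of two pushouts it is itself a pushout by proposition~\ref{prop:dred-po}. The point is that the very same composite square arises by pasting the top pushout $(PO)_\potop$ above the front left face along the shared edge $h:H\rto P$, because $h\circ l=c\circ r$ and $h_1\circ l_1=c_1\circ r_1$. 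Since the top face is a pushout and the composite is a pushout, proposition~\ref{prop:dred-po} forces the front left face to be a pushout. Recognizing this one rectangle in two ways is the step I expect to be the main obstacle; the rest is bookkeeping.

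The pleomorphism claims then follow from proposition~\ref{prop:ded-pleo}. The morphism $h$ is a pleomorphism because it is the denominator of the fraction $\fr{c}{h}$, and $l_1$ is a pleomorphism by the definition of pleopushout (definition~\ref{defi:dred-plpo}). Since the front left face is a pushout and pleomorphisms are stable under pushouts, $h_1$, being the pushout of $h$, is a pleomorphism; likewise $c_1$, the pushout of $l_1$ in the bottom face, is a pleomorphism. For $r_1$ I would invoke the two-out-of-three property: $h_1\circ l_1=c_1\circ r_1$ is a pleomorphism as a composite of the pleomorphisms $h_1$ and $l_1$, and $c_1$ is a pleomorphism, whence $r_1$ is a pleomorphism as well.

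Finally, that $\ss_H$ is an instance of $H$ in $\Ss$ means $\Ss_H\pleq\Ss$. The pleomorphisms $l_1:\Ss_K\plto\Ss_H$ and $r_1:\Ss_K\plto\Ss_C$ form a zig-zag, so $\Ss_C\pleq\Ss_H$; by transitivity of pleoequivalence $\Ss_C\pleq\Ss$, which is exactly the assertion that $\ss_C:C\rto\Ss_C\pleq\Ss$ is an instance of $C$ in $\Ss$.
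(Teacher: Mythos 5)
Your proposal is correct and follows essentially the same route as the paper: construct $\ss_P$ via the universal property of $(PO)_\potop$, identify the vertical pasting of $(PO)_\poback$ over $(PO)_\pobot$ with the pasting of $(PO)_\potop$ over the front left face to deduce that the latter is a pushout, then propagate pleomorphisms by stability under pushouts and the two-out-of-three property, and conclude pleoequivalence of $\Ss_H$ and $\Ss_C$ from the zig-zag $l_1$, $r_1$. The only nitpick is that the identification of the two composite rectangles rests on the back-left and front-right faces commuting ($\ss_H\circ l=l_1\circ\ss_K$ and $\ss_P\circ c=c_1\circ\ss_C$) rather than on the identities $h\circ l=c\circ r$ and $h_1\circ l_1=c_1\circ r_1$ you cite, but since you had already established the commutativity of the whole cube this is harmless.
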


\begin{proof}
It is easily checked that the following square is commutative
  $$\xymatrix@R=0pc@C=3pc{
  & K \ar[dl]_{l} \ar[drr]^{r} && \\
  H \ar[drr]_{h_1\circ\ss_H} &&& C \ar[dl]^{c_1\circ\ss_C} \\
  && \Ss_P \ar@{}[uul]|{(=)} & \\
  }$$ 
So, the pushout $(PO)_\potop$ gives rise to a unique morphism 
$\ss_P:P\rto\Ss_P$ such that the two front faces in the cube are commutative.
Since the other faces are yet known to be commutative,
the cube is commutative.

It is yet known that the top $(PO)_\potop$, bottom $(PO)_\pobot$
and back right $(PO)_\poback$ faces are pushouts.
Let us prove that the front left face is also a pushout.  
According to proposition~\ref{prop:dred-po}, 
composing $(PO)_\poback$ and $(PO)_\pobot$ gives rise to the 
``diagonal'' pushout $(PO)_\podiag$
  $$ \xymatrix@R=0pc@C=3pc{
  & K \ar[ld]_{l_1\circ\ss_K} \ar[rrd]^{r} && \\
  \Ss_H \ar[drr]_{h_1} &&& 
    C \ar[dl]^{c_1\circ\ss_C} \\
  && \Ss_P \ar@{}[uul]|{(PO)_\podiag} & \\
  } $$
Thanks to the commutativity of the cube, the pushout $(PO)_\podiag$ can also be written as
  $$ \xymatrix@R=0pc@C=3pc{
  & K \ar[ld]_{\ss_H\circ l} \ar[rrd]^{r} && \\
  \Ss_H \ar[drr]_{h_1} &&& 
    C \ar[dl]^{\ss_P\circ c} \\
  && \Ss_P \ar@{}[uul]|{(PO)_\podiag} & \\
  } $$
Now, since $(PO)_\potop$ and $(PO)_\podiag$ are pushouts,
proposition~\ref{prop:dred-po} implies that the 
front left face of the cube is also a pushout $(PO)_\pofront$.

It is yet known that $h$ and $l_1$ are pleomorphisms. 
Let us check that the three other morphisms 
of the bottom face are pleomorphisms,
using proposition~\ref{prop:ded-pleo}.
Since pleomorphisms are stable under pushouts, 
it follows from $(PO)_\pobot$ that $c_1$ is a pleomorphism
and from $(PO)_\pofront$ that $h_1$ is a pleomorphism.
Since three among the four morphisms in the bottom commutative square 
are pleomorphisms, so is the fourth: 
hence $r_1$ is also a pleomorphism.

It follows that $\Ss_H$ and $\Ss_C$ are pleoequivalent,
which proves the last assertion of the theorem.
\end{proof}

\begin{remark}
According to definition~\ref{defi:ded-step}, 
the deduction step with respect to $\fr{c}{h}$ is defined from the following diagram 
  $$ \xymatrix@R=0pc@C=4pc{
  H \ar[rd]^{h} \ar[dd]_{\ss_H} && C \ar@/^3ex/[dddl]^{\ss_C} \ar[ld]_{c} \\
  & P \ar[dd]^{\ss_P} \ar@<1ex>@{-->}[lu] & \\
  \Ss_H \ar[rd]^{h_1} &&  \\
  & \Ss_P \ar@<1ex>@{-->}[lu] \ar@{}[uuul]|{(PO)_\postep} \ar@{}[uuur]|{(=)} &  \\
  }$$
If $\fr{c}{h}$ satisfies assumption~\ref{ass:dred-po},
then by proposition~\ref{prop:dred-po} the composition of 
$(PO)_\potop$ and $(PO)_\postep$ yields a pushout $(PO)_\podred$,
which obviously forms the right part of a pleopushout under $(l,r)$
  $$ \xymatrix@R=0pc@C=3pc{
  & K \ar[dl]_{l} \ar[drr]^{r} \ar[dd]^{\ss_H\circ l} && \\
  H \ar[dd]_{\ss_H} &&& C \ar[dd]^{\ss_C} \\
  & \Ss_H \ar@<1ex>[dl]^{\id} \ar[drr]^{h_1} \ar@{}[ul]|{(=)} \ar@{}[urr]|{(PO)_\podred} && \\
  \Ss_H \ar@{-->}[ru] &&& \Ss_P \ar@<1ex>@{-->}[ull] \\
  } $$
So, a pleopushout under $(l,r)$ as assumed in theorem~\ref{theo:dred-plpo}
is obtained from the deduction step (definition~\ref{defi:ded-step}).
This proves that indeed a deduction step can be seen as a reduction step.
Moreover, theorem~\ref{theo:dred-plpo} 
states that whenever we are able to find a ``better'' pleopushout
than this obvious one, 
then we may get an instance of $C$ in $\Ss$ ``better'' than $\ss_C:C\rto\Ss_P$
in definition~\ref{defi:ded-step}.
Such a situation occurs in example~\ref{exam:dred}.
\end{remark}

\begin{example}
\label{exam:dred}
As in example~\ref{exam:ded}, 
let $\logL_{\eqn}$ be the equational logic.
The \emph{transitivity} rule:
  $$ \frac{x\equiv  y \quad y\equiv z }{x\equiv z}$$
can be obtained by a pushout from a span $H \lto K \rto C$:
  $$ \begin{array}{ccccc}
  \begin{array}{|c|} \multicolumn{1}{c}{\rule{0pt}{20pt}} \\ \hline H \\ \hline 
    x\equiv y \\ y\equiv z \\ \hline \end{array} & 
  \begin{array}{c} 
    \xymatrix@R=0pc{ & \ar[ld] \\ & \\} \end{array} & 
  \begin{array}{|c|} \hline K \\ \hline 
    x \\ z \\ \hline \multicolumn{1}{c}{\rule{0pt}{20pt}}  \\ \end{array} &
  \begin{array}{c} 
    \xymatrix@R=0pc{ \ar[rd] & \\ & \\ } \end{array} & 
  \begin{array}{|c|} \multicolumn{1}{c}{\rule{0pt}{20pt}}  \\ \hline C \\ \hline 
    x\equiv z \\ \hline \end{array} \\ 
  \end{array} $$
As in example~\ref{exam:ded}, 
let $\Ss_{\nat}$ be the equational specification ``of naturals''
and let us analyze the last step in the proof of $1+1\equiv s(1)$:
it has yet been proved that 
$1+1\equiv s(0+1)$ and $s(0+1)\equiv s(1)$, 
and it remains to use the transitivity rule. 
As in example~\ref{exam:ded}, 
$\Ss_{\nat,H}$ is $\Ss_{\nat}$ together with the terms 
$1+1$, $s(0+1)$, $s(1)$ and the equations 
$1+1\equiv s(0+1)$, $s(0+1)\equiv s(1)$.
Let us define $\Ss_{\nat,K}$ as 
$\Ss_{\nat}$ with the terms $1+1$ and $s(1)$
(with no additional equations),
with $\ss_{\nat,K}$ which maps $x$ to $1+1$ and $z$ to $s(1)$
and with $l_1$ the inclusion.
Then by pushout $\Ss_{\nat,C}$ is made of $\Ss_{\nat}$ 
with the equation $1+1\equiv s(1)$. 
It is smaller than $\Ss_{\nat,P}$ from example~\ref{exam:ded}:
the lemmas $1+1\equiv s(0+1)$ and $s(0+1)\equiv s(1)$,
which have been used during the proof, 
are kept in $\Ss_{\nat,P}$ while they are dropped from $\Ss_{\nat,C}$. 
  $$ \begin{array}{ccccc}     
  \begin{array}{|l|} \hline 
    \multicolumn{1}{|c|}{\Ss_{\nat,H}} \\ \hline 
    \Ss_{\nat} \mbox{ with } \\
    1+1\equiv s(0+1) \\ 
    s(0+1)\equiv s(1) \\ 
    \hline \multicolumn{1}{c}{\rule{0pt}{20pt}} \\ \end{array} & 
  \begin{array}{c} 
    \xymatrix@R=0pc@C=1pc{ \ar[rd] & \\ & \ar@<1ex>@{-->}[lu] \\ } \end{array} & 
  \begin{array}{|l|} \multicolumn{1}{c}{\rule{0pt}{20pt}} \\ \hline 
    \multicolumn{1}{|c|}{\Ss_{\nat,P}} \\ \hline 
    \Ss_{\nat} \mbox{ with } \\
    1+1\equiv s(0+1) \\ 
    s(0+1)\equiv s(1) \\ 
    1+1\equiv s(1) \\ \hline \end{array} &
  \begin{array}{c} 
    \xymatrix@R=0pc@C=1pc{ & \ar[ld]  \\ \ar@<-1ex>@{-->}[ru] & \\  } \end{array} & 
  \begin{array}{|l|} \hline 
    \multicolumn{1}{|c|}{\Ss_{\nat,C}} \\ \hline 
    \Ss_{\nat} \mbox{ with } \\
    1+1\equiv s(1) \\ \hline 
    \multicolumn{1}{c}{\rule{0pt}{20pt}} \\ \end{array} \\
  \end{array} $$
\end{example}

%------------------------------------------------------------------------- 
%------------------------------------------------------------------------- 
\section{Conclusion}
\label{sec:conc}

Deduction systems as well as graph rewriting systems
can be seen as reduction systems. 
This paper lays the foudations for such comparisons. 
Further developments might involve adhesive categories \cite{adhcat}.   
This should provide a new point of view about the role of
pullbacks in graph rewriting, as well as new methods for 
deduction in diagrammatic logics.

%------------------------------------------------------------------------- 
%------------------------------------------------------------------------- 

%------------------------------------------------------------------------- 
%------------------------------------------------------------------------- 

\begin{thebibliography}{}

\bibitem[Corradini et al. 1997]{dpo}
Andrea Corradini, Ugo Montanari, Francesca Rossi, 
Hartmut Ehrig, Reiko Heckel, Michael Löwe.
Algebraic Approaches to Graph Transformation - 
Part I: Basic Concepts and Double Pushout Approach. 
\emph{Handbook of Graph Grammars and Computing by Graph Transformation,
Vol.~1: Foundations}, World Scientific, p.~163--246 (1997). 

\bibitem[Corradini et al. 2006]{sqpo}
Andrea Corradini, Tobias Heindel, Frank Hermann, Barbara König.
Sesqui-Pushout Rewriting. 
ICGT'06.
\emph{Lecture Notes in Computer Science} 4178, 
Springer, p.~30--45 (2006).

\bibitem[Dom\'{\i}nguez and Duval 2010]{DD10} 
C\'esar Dom\'{\i}nguez, Dominique Duval. 
Diagrammatic logic applied to a parameterization process.
\emph{Mathematical Structures in Computer Science} 20(04) 
p.~639--654 (2010). 

\bibitem[Dumas et al. 2010]{DDR10}
Jean-Guillaume Dumas, Dominique Duval, Jean-Claude Reynaud.  
Cartesian effect categories are Freyd-categories. 
\emph{Journal of Symbolic Computation} (2010). % number, pages? 

\bibitem[Duval 2003]{Du03} 
Dominique Duval Diagrammatic specifications. 
\emph{Mathematical Structures in Computer Science} 13 
p.~857--890 (2003).

\bibitem[Duval et al. 2009]{hpo} 
Dominique Duval, Rachid Echahed, Fr\'ed\'eric Prost. 
A Heterogeneous Pushout Approach to Term-Graph Transformation. 
RTA'09. 
\emph{Lecture Notes in Computer Science} 5595,
Springer, p.~194--208 (2009). 

\bibitem[Ehresmann 1968]{Eh68} 
Charles Ehresmann. Esquisses et types de structures alg\'ebriques.
\emph{Bull. Instit. Polit. Ia\c{s}i} XIV (1968).

\bibitem[Ehrig et al. 1997]{spo}
Hartmut Ehrig, Reiko Heckel, Martin Korff, Michael Löwe, 
Leila Ribeiro, Annika Wagner, Andrea Corradini.
Algebraic Approaches to Graph Transformation - 
Part II: Single Pushout Approach and Comparison with 
Double Pushout Approach. 
\emph{Handbook of Graph Grammars and Computing by Graph Transformation,
Vol.~1: Foundations}, World Scientific, p.~247--312 (1997). 

\bibitem[Gabriel and Zisman 1967]{GZ67}
Peter Gabriel and Michel Zisman. 
\emph{Calculus of Fractions and Homotopy Theory}, 
Springer (1967).

\bibitem[Lack and Sobocinski 2005]{adhcat}
Stephen Lack, Pawel Sobocinski.
Adhesive and quasiadhesive categories. 
\emph{Informatique Th\'eorique et Applications} 
39(3) p.~511--545 (2005).

\bibitem[Lambek 1968]{lam68}
Joachim Lambek. 
Deductive systems and categories I. 
\emph{Mathematical Systems Theory} 2(4) p.~287--318 (1968).

\bibitem[Lawvere 1969]{law69} 
F. W. Lawvere. 
Adjointness in foundations. 
\emph{Dialectica} 23 p.~281--296 (1969).




\end{thebibliography}
\end{document}